\definecolor{myblue}{RGB}{209,221,243}
\definecolor{myblu}{RGB}{110,148,216}
\definecolor{myre}{RGB}{255,188,188}
\definecolor{li}{gray}{0.97}
\definecolor{lt}{gray}{0.7}
\def\BState{\State\hskip-\ALG@thistlm}
\newtheorem{theorem}{Theorem}[section]
\renewcommand{\algorithmicrequire}{\textbf{Input: }}
\renewcommand{\algorithmicensure}{\textbf{Output: }}
\algrenewcommand\algorithmicindent{1.4em}%
\begin{document}

\title[An ideal hierarchical secret sharing scheme]{\textbf{\vspace{-3mm} \\}An ideal hierarchical secret sharing scheme}

\author[GHANEM Meriem  \and BOUROUBI Sadek]{GHANEM Meriem$^1$, BOUROUBI Sadek$^2$ \vspace{2mm} \\ \MakeLowercase{\texttt{ghanem.meriem@gmail.com}$^1$, \texttt{sbouroubi@usthb.dz}$^2$}  \vspace{3mm}\\\emph{$^{1,2\,}$DGRSDT, USTHB, F\MakeLowercase{aculty of }M\MakeLowercase{athematics}, D\MakeLowercase{epartment of} O\MakeLowercase{perations }R\MakeLowercase{esearch},\vspace{1mm} \\ L'IFORCE L\MakeLowercase{aboratory.}\vspace{1mm} \\ P.B. 32 E\MakeLowercase{l-Alia}, 16111, B\MakeLowercase{ab }E\MakeLowercase{zzouar}, A\MakeLowercase{lgiers}, A\MakeLowercase{lgeria.\vspace{1mm}\\
}}
}

\begin{abstract}
 One of the methods used in order to protect a secret $K$ is a \emph{secret sharing scheme}. In this scheme the secret $K$ is distributed among a finite set of participants $P$ by a special participant called the \emph{dealer}, in such a way that only predefined subsets of participants can recover the secret after collaborating with their secret shares. The construction of secret sharing schemes has received a considerable attention of many searchers whose main goal was to improve the information rate. In this paper, we propose a novel construction of a secret sharing scheme which is based on the hierarchical concept of companies illustrated through its organization chart and represented by a tree. We proof that the proposed scheme is \emph{ideal} by showing that the information rate equals 1. In order to show the efficiency of the proposed scheme, we discuss all possible kinds of attacks and proof that the security in ensured. Finally, we include a detailed didactic example for a small company organization chart. \\

\vspace{1mm}

\noindent\textsc{Keywords and phrases.} Hierarchical secret sharing scheme; Qualified subsets; Access structure; Interpolation; Information rate.\vspace{1em}\\
\noindent 2020 AMS Subject Classifications: 11T71; 94A60; 94A62

\end{abstract}

%\thanks{This work was supported by L’IFORCE Laboratory of USTHB University}

%%%%%%%%%%%%%%%%%%%%%%%%%%%%%%%%%%%%%%%%%%%%%%%%%%%%%%%%%%%%%%%%%%%

\maketitle

\section{Introduction}
\label{sec:1}
The fast development of computer networks and data communication systems make the protection of secret data extremely imperative. In order to protect a secret, several methods have been applied before, one of theme is to encrypt data, but this will change the problem instead of solving it, since another method is required to protect the encrypted data. It’s also possible to keep the secret in one well-guarded location, but this method is very unreliable since the secret can be destroyed or become inaccessible. Another method consists in sharing the data, either by storing multiple copies of the data in different locations, which would increase security vulnerabilities, or by splitting the data into several parts and sharing them between different members of the system. This last method is called secret sharing scheme and would be very efficient in case where the reconstruction of the initial data does not require the presence of all the system members, otherwise the veto given to each member would paralyze the system \cite{01}. Secret sharing schemes have many applications in different areas, such as access control, launching a missile, and opening a bank vault. For more details see for instance \cite{16,15}.\vspace{0.2cm} \\
The secret sharing scheme is therefore a method of distributing a secret $K$ among a finite set of participants $P$, in such a way that only predefined subsets of participants can collaborate with their secret shares to recover the secret $K$. These subsets are called \emph{qualified subsets} and the set of all qualified subsets is called the \emph{access structure} denoted $\Gamma$ \cite{07}. Each subset of participants $Y\in \Gamma$ is called \emph{a minimal qualified subset} if ($Y'\subset Y$ and $Y'\in \Gamma$) implies $Y'=Y$. The family of all minimal qualified subsets is noted $\Gamma_{0}$.
In a secret sharing scheme, the secret $K$ is chosen by a special participant, called the dealer, who is responsible for computing and distributing the shares among the set of participants $P$. The share of any participant refers specifically to the information that the dealer sends in private. It is required to keep the size of shares as small as possible since the security of a system degrades as the amount of information that must be kept secret increases.\vspace{0.2cm} \\
Many approaches have been proposed for the construction of a secret sharing scheme \cite{17}. The first one called $(t, n)$-\emph{threshold scheme} was introduced independently by Shamir and Blakley \cite{01,05} in 1979. In a $(t, n)$-threshold scheme, all groups of at least $t$ participants of $n$-participants are qualified and can reconstruct the secret, while those with less than $t$ participants are unqualified and can't have any information about the secret. The scheme proposed by Shamir is based on polynomials over a finite field $GF(q)$ since a random polynomial $f$ is chosen by the dealer for computing and distributing the shares among the set of participants $P$ in such a way that, each participant $p_{i}$ is given an ordered pair $(x_{i},f(x_{i}))$ as a share. This scheme still reliable and secure even when misfortunes destroy half the pieces and security breaches expose all but one of the remaining pieces. This scheme is \emph{perfect}, since all qualified subsets can reconstruct the secret and unqualified subsets cannot determine any information about the secret. The scheme is called \emph{ideal}, if $x_{i}$ is publicly revealed so that the share of participant $p_{i}$ becomes just $f(x_{i})$ and then the size of each share equals the size of the secret. The scheme proposed by Blakley is based on geometries over finite fields, it's perfect and can be modified slightly to become ideal, as explained by Ernest \cite{07}.\vspace{0.2cm} \\
Ito et al. have generalized the concept of threshold scheme and showed that, given any \emph{monotone access structure} $\Gamma$, i.e., for $Y\in\Gamma$, if $Y\subset Y'$ then $Y'\in\Gamma$,  there exist a perfect secret sharing scheme to realize the structure \cite{10,09}. Benaloh and Leichter  proposed a different algorithm that has a lower \emph{information rate} than Ito's et al. construction \cite{11}. In both constructions, the information rate decreases exponentially as a function of the number of participants $n=|P|$. The information rate, noted $\rho$, is considered as a measure of the efficiency of a secret sharing scheme. It is defined as the ratio between the secret size and the maximum size of the shares $S$, that is, $\rho = \frac{\log_2(\left|K \right|)}{\log_2(\left|S\right|)}$ \cite{07}. Other measures can also be considered such as \emph{the average information rate}, which is defined as the ratio between the length of the secret and the arithmetic mean of the length of all shares and expressed as follow $\widetilde{\rho}=\frac{n\log_2(\left|K\right|)}{\sum_{i=1}^{n}{\log_2(\left|S_i\right|)}}$ \cite{12}. \vspace{0.2cm} \\
Another approach based on the \emph{multilevel access structures} was presented by Simmons in 1988. In this approach each participant is assigned a level which is a positive integer and the access structure consists of those subsets which contain at least $r$ participants all of level at most $r$. That means for instance if $r=3$, then $3$ participants of level 3 can determine the secret, and also $1$ participant of level $1$ and one other participant of level $2$ and one participant of level $3$ can determine the secret, for more details see for instance \cite{08}. In \cite{07} Brickell shown that given any multilevel access structure, there exists $q_{0}$ such that for any $q$ a prime power with $q > q_{0}$, there is an ideal secret sharing scheme realizing this access structure over $GF(q)$.  \vspace{0.2cm} \\
There are also another approaches based on \emph{graph access structure} that have received a considerable attention. In the most of these approaches, many researchers have proposed different constructions of a perfect secret sharing scheme based on uniform access structures which contains qualified subsets all of the same cardinality $m$. In these constructions, participants are represented by the vertices of a graph $G$, the uniform access structure $\Gamma$ is based on the concept of adjacent vertices and represented by the edges, for more details see for instance \cite{04,18,03,14,06,13}. In \cite{02} a novel approach to design a graph access structure, which is based on the concept of non-adjacent vertices, was proposed. In this approach, an independent dominating set of vertices in a graph $G$ was introduced and applied as a novel idea to construct a perfect secret sharing scheme such that the vertices of the graph represent the participants and the dominating set of vertices in $G$ represents the minimal qualified set.
\section{The proposed construction algorithm}
\label{sec:2}
Shamir \cite{01} had specified that one of the useful properties of the proposed threshold scheme is that by using tuples of polynomial values as parts, it is possible to get a hierarchical scheme in which the number of parts needed to determine the secret depends on the importance of the participants. He also brought a brief explanation based on an example of a company's check signature. The motivation of this paper is to propose a novel construction algorithm of an ideal secret sharing scheme which is based on the hierarchical concept of companies and in which the access structure is not uniform.The proposed construction algorithm include two phases which are achieved by the dealer who can, for instance, be represented by the board of directors $(BOD)$ at a company.
\subsection{The initialization phase}
\label{sec:2.1}
The hierarchical concept of any company is illustrated through its organization chart which is represented by a tree $T=\left(V,E\right)$ such that:
\begin{itemize}[leftmargin=0.75cm,label=\textbullet,itemsep=0.2cm]
    \item The height of $T$ corresponds to the number of hierarchical levels at the company, denoted $h$, and each hierarchical level is denoted $N_{j}$, for $j=1,\ldots,h$.
		\item The set of vertices $V$ corresponding to the company's employees represents the set of participants $P$. As each participant $i$ belong to a specified level $j$, we denote by $P_{ij}$ such participant.
  \item The set of edges $E$ corresponds to the hierarchical relations between participants (employees).
\end{itemize}
Figure~\ref{figure1} given bellow, illustrates an organization chart of a company with $9$ employees and $3$ hierarchical levels.
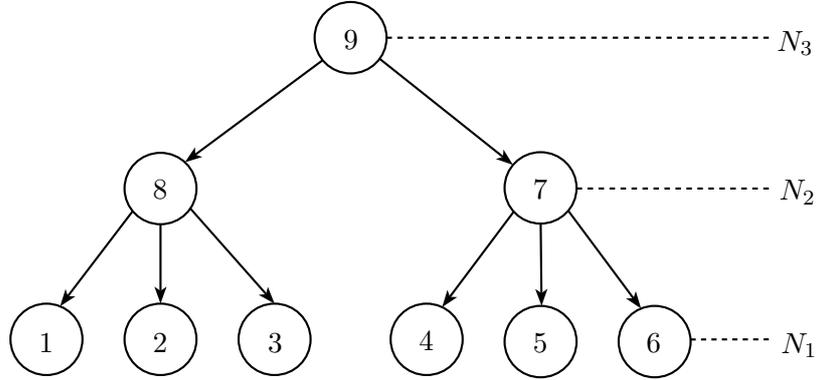
\begin{figure}[H]
\begin{center}
\psset{xunit=1.0cm,yunit=1.0cm,algebraic=true,dimen=middle,dotstyle=o,dotsize=5pt 0,linewidth=0.8pt,arrowsize=3pt 2,arrowinset=0.25}
\begin{pspicture*}(1.1343222434871767,5.145150786292249)(12.278464004306558,10.870433456467174)
\psaxes[labelFontSize=\scriptstyle,xAxis=true,yAxis=true,Dx=0.5,Dy=0.5,ticksize=-2pt 0,subticks=2]{->}(0,0)(1.1343222434871767,5.145150786292249)(12.278464004306558,10.870433456467174)
\rput[tl](8.4,8.11){$7$}
\rput[tl](3.4,8.11){$8$}
\rput[tl](5.91,10.1){$9$}
\pscircle(6.,10.){0.47332386778193497}
\pscircle(3.5,8.){0.4733238677819355}
\pscircle(8.498584457137188,8.008432758609183){0.4733238677819355}
\pscircle(7.,6.){0.4733238677819355}
\pscircle(8.498584457137188,5.970241233049502){0.4733238677819355}
\pscircle(9.998584457137191,5.970241233049502){0.4733238677819355}
\psline{->}(5.63253493006577,9.701663099516198)(3.821167704047247,8.347687776157102)
\psline{->}(6.382101423977742,9.720654361755491)(8.136427546873868,8.313191443736576)
\pscircle(2.,6.){0.4733238677819355}
\pscircle(3.5,6.){0.47332386778193597}
\pscircle(5.,6.){0.4733238677819355}
\psline{->}(3.133963422697201,7.699912166377071)(2.1790295638113095,6.438159673057189)
\psline{->}(3.5,7.526676132218064)(3.5,6.473323867781936)
\psline{->}(3.8924124648720424,7.7353342840013095)(5.,6.473323867781936)
\psline{->}(8.145615608361842,7.693078234927885)(7.203445611009179,6.427370292806079)
\psline{->}(8.499902247178753,7.535110725273044)(8.511457555565038,6.4433900121337535)
\psline{->}(8.862500764727278,7.705777170717524)(9.820000835131868,6.408582850698923)
\rput[tl](11.61,10.09){$N_{3}$}
\rput[tl](11.66,6.08){$N_{1}$}
\rput[tl](11.64,8.11){$N_{2}$}
\psline[linestyle=dashed,dash=2pt 2pt](6.473323867781935,10.)(11.5,10.)
\psline[linestyle=dashed,dash=2pt 2pt](8.971841401203262,8.000473568619492)(11.5,8.)
\psline[linestyle=dashed,dash=2pt 2pt](10.471076911116583,5.998283428836976)(11.5,6.)
\rput[tl](1.9,6.08){$1$}
\rput[tl](3.4,6.08){$2$}
\rput[tl](4.91,6.08){$3$}
\rput[tl](6.9,6.1){$4$}
\rput[tl](8.4,6.08){$5$}
\rput[tl](9.89,6.08){$6$}
\end{pspicture*}
\caption{Company organization chart $T$ with 9 employees.}
\label{figure1}
\end{center}
\end{figure}
\noindent In the initialization phase, the dealer proceed to the construction of the access structure $\Gamma$ containing all the qualified subsets. A subset $X$ of $P$ is considered as qualified if and only if:
\begin{enumerate}[leftmargin=0.75cm,label=($\roman*$),itemsep=0.2cm]
  \item $X$ contains more than one participant. No participant will have the veto right for reconstructing the secret alone, especially the first manager. This condition is formulated by: \\
	$$\sum \limits_{P_{ij} \in X}j \geq h+1.$$
  \item The elements of $X$ cannot all be at the same hierarchical level, in order to reduce the risk of corruption. This condition is expressed by: \\
	$$\left|X \cap N_{j} \right| \leq  \left\lceil \frac{h+1}{j}\right\rceil -1,\ \textrm{for} \ j=1,\ldots, h. $$
\end{enumerate}
The access structure $\Gamma$ is then:
$$\Gamma= \left\{ X \subset P \ : \ \sum \limits_{P_{ij} \in X}j \geq h+1 \ \textrm{and} \  \left|X \cap N_{j} \right| \leq  \left\lceil \frac{h+1}{j}\right\rceil -1,\ \textrm{for} \ j=1,\ldots, h\right\}.$$
The minimum access structure $\Gamma_{0}$ is then $\Gamma_{0} = \left\{ X \in\Gamma: \forall X' (X' \varsubsetneq X\Rightarrow X'\not\in \Gamma )\right\}$.
\subsection{The decomposition phase}
\label{sec:2.2}
In this phase, the dealer:
\begin{itemize}[leftmargin=0.75cm,label=\textbullet,itemsep=0.2cm]
\item Choose a prime power number $q$;
\item Select the secret to share $K=(k_{1},\ldots,k_{h})$ that he encodes in the finite field $GF(q)$;
\item Generate randomly one value $a_{0}$ in $GF(q)$;
\item Construct the polynomial $f(x)$ of degree $\emph{h}$:
$$f(x)= a_{0}+k_{1} x+\cdots+k_{h} x^h;$$
\item Calculate and distribute the shares to all participants. The share given to each participant $P_{ij}$, denoted $S_{ij}$, consists on two parts. The first one is publicly revealed and correspond to there login $i$ and hierarchical level $j$. The second part is sent in private and consists on $j$ values of ordered pairs $\left(x_{i1},f(x_{i1})\right), \ldots,\left(x_{ij},f(x_{ij})\right)$, so that the number of participants who can pool their shares to reconstruct the secret depends on their importance.
\end{itemize}
The following algorithm resumes the proposed construction of secret sharing scheme.
\begin{center}
\begin{algorithm}[H]
{\caption{Construction of secret sharing scheme}
\label {Alg1}
\begin{flushleft}
\algorithmicrequire
\begin{enumerate}
\item The set of company's participants $P=\left\{P_{ij},\ i=1,\ldots,n; \ j=1,\ldots,h\right\}$;\\
\item A prime power $q$;\\
\item The polynomial $f(x)= a_{0}+k_{1} x+\cdots+k_{h} x^h$. \\
\end {enumerate}
\algorithmicensure The set of shares assigned to participants $S= \left\{S_{ij},\ i=1,\ldots,n; \ j=1,\ldots,h\right\}$.
\rule{\linewidth}{0.5pt}\vspace{-0.1cm}
\end{flushleft}
\begin{algorithmic}[1]
\State For each participant $P_{ij}$, calculate $x_{im}=1+mih,\ m=1,\ldots,j$;
\State Calculate $S_{ij}=\left(i,j,\left(x_{i1},f(x_{i1})\right),\left(x_{i2},f(x_{i2})\right),\ldots,\left(x_{ij},f(x_{ij})\right)\right),\ i=1,\ldots,n;\ j=1,\ldots,h$;
\State \textbf{Return:} $S= \left\{S_{ij},\ i=1,\ldots,n;\ j=1,\ldots,h\right\}.$
\end{algorithmic}
}
\end{algorithm}
\end{center}
\noindent According to Horner's method, Algorithm~\ref{Alg1} can be achieved, in the worst case, in $\textit{O}(nH)$ time complexity.
\section{The proposed reconstruction algorithm}
\label{sec:3}
Let $K=(k_{1},\ldots,k_{h})$ be the secret shared over the finite set of participants $P$ by application of Algorithm~\ref{Alg1}. According to the polynomial chosen by the dealer for calculating and distributing the shares, a group of participants $X$ who want to collaborate with their shares in order to recover the secret $K$, should in first reconstruct the polynomial $f$, which can be done by interpolation, for that $X$ should own at least $h+1$ values of ordered pairs, $(x_{1},f(x_{1})),\ldots,(x_{h+1},f(x_{h+1}))$. The secret $K$ can be recover by applying the logical XOR operator on the $k_{i}$'s deduced from $f$:
$$K=k_{1}\oplus k_{2}\oplus \cdots \oplus k_{h}.$$
The proposed reconstruction is summarized in Algorithm~\ref{Alg2}.
\begin{theorem}\label{th1}
The constructed secret sharing scheme is perfect.
\end{theorem}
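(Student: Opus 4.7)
The plan is to verify the two defining properties of a perfect scheme: every $X\in\Gamma$ reconstructs the secret from its pooled shares, and every $X\notin\Gamma$ gains no information about $K$.

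For the qualified case, I would fix $X\in\Gamma$ and apply condition (i) of Section~\ref{sec:2.1}: the sum $\sum_{P_{ij}\in X}j\ge h+1$ counts exactly the total number of private evaluation pairs $(x_{im},f(x_{im}))$ held by the members of $X$. Before invoking Lagrange interpolation over $GF(q)$, I would verify that the corresponding abscissas $x_{im}=1+mih$ are pairwise distinct; for a fixed participant $i$ this is immediate since $m$ varies in $\{1,\ldots,j\}$, and across distinct participants it can be guaranteed by taking the prime power $q$ sufficiently large relative to $nh^{2}$. Once distinctness is in hand, the unique interpolant of degree at most $h$ through the $h+1$ pooled points is $f$ itself, so the coefficients $k_{1},\ldots,k_{h}$ are read off, and the XOR step of Algorithm~\ref{Alg2} returns $K$.

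For the unqualified case, I would argue information-theoretically. If $X$ violates (i), then $X$ owns at most $h$ points of $f$, and the $h+1$ unknown coefficients $(a_{0},k_{1},\ldots,k_{h})$ are constrained by at most $h$ linear equations over $GF(q)$. Since $a_{0}$ was drawn uniformly at random by the dealer, a direct counting argument shows that for every candidate tuple $(k_{1}',\ldots,k_{h}')$ there is exactly one value of $a_{0}$ compatible with the shares of $X$; consequently the posterior distribution over $(k_{1},\ldots,k_{h})$ given those shares equals the prior, and no information about $K$ is leaked.

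The main obstacle will be the treatment of unqualified subsets that fail condition (ii) while still satisfying (i): such a subset pools at least $h+1$ evaluation pairs, so a naive interpolation argument would let them recover $f$. I would address this either by showing that the specific abscissa pattern $x_{im}=1+mih$ forces enough collisions among the pooled points in a level-imbalanced configuration that fewer than $h+1$ \emph{distinct} abscissas are actually available, or by reading condition (ii) as an organizational refinement of $\Gamma$ complementary to the information-theoretic guarantee established from (i) alone. The remaining bookkeeping reduces to the classical Shamir argument transplanted into the tree-based hierarchical setting of Section~\ref{sec:2}.
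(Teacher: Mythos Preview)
Your counting argument for the case where condition~(i) fails contains a genuine error. You claim that ``for every candidate tuple $(k_{1}',\ldots,k_{h}')$ there is exactly one value of $a_{0}$ compatible with the shares of $X$.'' This holds only when $X$ possesses at most one evaluation point. As soon as $X$ pools two points $(x_{1},y_{1})$ and $(x_{2},y_{2})$, fixing a candidate $K'=(k_{1}',\ldots,k_{h}')$ forces both $a_{0}=y_{1}-\sum_{j}k_{j}'x_{1}^{j}$ and $a_{0}=y_{2}-\sum_{j}k_{j}'x_{2}^{j}$, which is consistent only when $K'$ lies on the hyperplane $\sum_{j}k_{j}'(x_{1}^{j}-x_{2}^{j})=y_{1}-y_{2}$. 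In this construction only $a_{0}$ is random while $k_{1},\ldots,k_{h}$ \emph{are} the secret, so one field element of randomness cannot mask an $h$-dimensional secret against more than one linear constraint. The classical Shamir uniformity argument you are transplanting works because Shamir randomizes all non-secret coefficients; here that premise is absent, and the posterior over $K$ is not the prior once two or more shares are seen.

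The paper sidesteps this entirely by not attempting an information-theoretic argument. For (i) failing it merely states that fewer than $h+1$ points ``don't allow the reconstruction of $f(x)$,'' which is a reconstructibility claim rather than a zero-information claim. For (ii) failing it invokes exactly your second alternative: Algorithm~\ref{Alg2} tests $X\in\Gamma$ before interpolating, so a subset violating (ii) is refused at the protocol level and never pools its points. Your ``main obstacle'' therefore dissolves in the paper's framing, but only because the paper is tacitly arguing something weaker than the information-theoretic perfection you set out to prove; your abscissa-collision route would not work, since the points $x_{im}=1+mih$ are distinct across participants for suitable $q$.
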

\begin{proof} Let $X$ be a qualified subset of participants, then the conditions (\emph{i}) and (\emph{ii}), in the initialization phase~\ref{sec:2.1} above, are satisfied.
According to the decomposition phase~\ref{sec:2.2}, each $P_{ij}$ belonging to $X$ owns as much values of $(x,f(x))$ as his level $j$, $(x_{i1},f(x_{i1})),\ldots,(x_{ij},f(x_{ij}))$. Thus, $X$ owns at least $h+1$ values of $(x,f(x))$ and can recover $f(x)$, by using interpolation, and then the secret $K$ by applying the logical XOR operator on the $k_{i}$'s deduced from $f$. Therefore, any qualified subset can reconstruct the secret.\vspace{0.2cm} \\
\noindent Now, let $X$ be an unqualified subset of participants, then one of the conditions (\emph{i}) and (\emph{ii}), in the initialization phase~\ref{sec:2.1}, is not satisfied. If the condition (\emph{i}) is not, $X$ owns less than $h+1$ values of $(x,f(x))$, which don't allow the reconstruction of $f(x)$. In the other hand, as the elements of $X$ cannot all be at the same hierarchical level, if the condition $(ii)$ is not satisfied, the system denies access. Therefore, any unqualified subset has no information about the secret.
\end{proof}
\begin{algorithm}[H]
{\caption{ Reconstruction of a secret $K$}
\label {Alg2}
 \begin{flushleft}
\algorithmicrequire
\begin{enumerate}
\item A subset of participants $X\subset P$;
\item The set of hierarchical levels, $N_{1},\ldots,N_{h}$;
\item The shares of participants belonging to $X$.
\end {enumerate}
\algorithmicensure
\begin{enumerate}
  \item The secret $K=k_{1}\oplus k_{2}\oplus \cdots \oplus k_{h}$ or
 \item The system denies access.
\end {enumerate}
\rule{\linewidth}{0.5pt}\vspace{-0.1cm}
\end{flushleft}
\begin{algorithmic}[1]
%For each share $S_{ij}$,
%\Statex \textbf{If} $\displaystyle \left(\sum \limits_{P_{ij} \in X}j \geq H+1\  \textrm{and}\ \forall j=1,\ldots,H,\ |X \cap N_{j}| \leq  \left\lceil \frac{H+1}{j}\right\rceil -1\right)$
\Statex \textbf{If} $X\in\Gamma$ \textbf{Then} Apply interpolation to reconstruct $f(x)$ and then the secret $K$
\Statex \hspace{1.54cm} \textbf{Else} The system denies access and displays "The subset is not qualified".
\end{algorithmic}
}
\end{algorithm}
\section{The efficiency of the proposed secret sharing scheme}\label{sec:4}
To measure the efficiency of the proposed secret sharing scheme, we consider the information rate \mbox{$\rho = \frac{\log_2(\left|K \right|)}{\log_2(\left|S\right|)}$}, where $S$ is the maximum share.
\begin{theorem}\label{th2}
The constructed secret sharing scheme is ideal.
\end{theorem}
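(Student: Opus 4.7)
The plan is to show that $\rho=1$ by computing the two quantities appearing in $\rho=\log_{2}(|K|)/\log_{2}(|S|)$ separately and observing they are equal. The strategy has three natural steps: first identify the maximum share, then strip away its publicly revealed components to obtain its effective size, and finally compare with the size of the secret.

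First I would identify the share of largest size. Every share $S_{ij}$ produced by Algorithm~\ref{Alg1} contains $j$ ordered pairs $(x_{im},f(x_{im}))$, so the size of $S_{ij}$ is monotone in the level $j$. Since $j$ ranges over $1,\ldots,h$, the maximum is attained by a participant at the top hierarchical level $h$, whose share contains exactly $h$ such pairs.

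Next I would, as is already done in the introduction for Shamir's ideal variant, remove the publicly revealed components from the accounting. The identifier $i$, the level $j$, and the evaluation points $x_{im}=1+m i h$ are deterministic functions of $i,j,m,h$ and can be published without any loss of secrecy, so only the function values $f(x_{i1}),\ldots,f(x_{ih})\in GF(q)$ are truly private. Hence the effective maximum share has cardinality $|S|=q^{h}$, giving $\log_{2}|S|=h\log_{2}q$.

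Finally I would count the secret. Since $K=(k_{1},\ldots,k_{h})$ is a tuple of $h$ elements of $GF(q)$ (the coefficients of $f$ other than the randomizing constant $a_{0}$), one has $|K|=q^{h}$ and $\log_{2}|K|=h\log_{2}q$. Plugging in,
$$\rho=\frac{\log_{2}|K|}{\log_{2}|S|}=\frac{h\log_{2}q}{h\log_{2}q}=1,$$
which is the defining equality for an ideal scheme.

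The only delicate point — and the main obstacle — is the second step: justifying that the components $i$, $j$ and $x_{im}$ may be treated as public and therefore do not inflate $|S|$. I would address this explicitly by citing the convention that the paper itself adopts in its discussion of Shamir's scheme, where publicly revealing $x_{i}$ is exactly what turns the scheme from perfect to ideal. Once that convention is in place, the rest of the argument reduces to a direct count.
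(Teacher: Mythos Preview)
Your proposal is correct and follows essentially the same approach as the paper's own proof: both identify the maximum share as that of the top-level participant (level $h$), discard the publicly computable components so that only the $h$ private values $f(x_{im})\in GF(q)$ remain, and compare this with the secret $K=(k_1,\ldots,k_h)\in GF(q)^h$ to obtain $\rho=h\log_2 q/(h\log_2 q)=1$. Your treatment of the ``delicate point'' is slightly more explicit than the paper's, but the argument is the same.
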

\begin{proof} The secret $K=(k_{1},\ldots,k_{h})$ is an $h$-dimensional vector such that each $k_{i},\ i=1\ldots,h$, is in $GF(q)$. The $k_{i}$'s length is then equal to $\log_2(q)$. According to the decomposition phase~\ref{sec:2.2}, each share $S_{ij}$ is represented by a vector of $j+2$ components, in which $j$ components are private. The maximum share $S$ is the one corresponding to the first manager of the company which is at the high level $h$, its length is then equal to $h\log_2(q)$. Hence, $\rho=1$.
\end{proof}
\section{Security analysis}\label{sec:5}
The two main security requirements in a secret sharing scheme are confidentiality and authentication. Confidentiality is about ensuring that the information is only available to the qualified subsets, while the authentication is intended to ensure that each participant trying to collaborate in order to reconstruct the secret, is the one he claims to be.\vspace{0.2cm} \\
In this paper, confidentiality has been demonstrated in Theorem~\ref{th1} by proving that the proposed secret sharing scheme is perfect, while authentication is ensured by denying the access of all types of attacks. In fact, in such protocols, two types of attacks can arise: the insider and outsider attacks.\vspace{0.2cm} \\
For the outsider attacks, where the attackers are not belonging to the system, the attacker aims to recover the secret by trying all possible combinations. As the secret $K$ is an $h$-dimensional vector in which each component is in $GF(q)$, the number of possible combinations increases according to the number of hierarchical levels $h$. Thus, the brute force attack becomes a combinatorial explosion. \vspace{0.2cm} \\
For the insider attacks, where the attackers are belonging to the system but consist on an unqualified subset of participants, as all parameters are public in the proposed scheme except the secret $K$, three types of insider attacks can arise:
\begin{itemize}[leftmargin=0.75cm,label=\textbullet,itemsep=0.2cm]
  \item The first case consists on participant in level $N_{i}$ who may pretend to be a participant of another lower level $N_{j},\ j < i$, and use only a part of his share, in order to escape the condition $(\emph{ii})$ described in the initialization phase~\ref{sec:2.1}. The following conditions $(\emph{iii})$ and $(\emph{iv})$ are then included in the proposed scheme and checked before proceeding to the reconstruction algorithm~\ref{Alg2}. In the case where these conditions are not satisfied, the system generates an authentication error and display an attack attempt message without executing the reconstruction algorithm~\ref{Alg2}.\vspace{0.2cm} \\
For each given share $$S_{ij}=\left(i,j,\left(x_{i1},f(x_{i1})\right),\left(x_{i2},f(x_{i2})\right),\ldots,\left(x_{ij},f(x_{ij})\right)\right),\ i=1,\ldots,n;\ j=1,\ldots,h,$$
\begin{enumerate}[leftmargin=0.75cm,label=($\roman*$),itemsep=0.5em]
\setcounter{enumi}{2}
\item The login $i$ corresponds to a participant of the level $j$. This condition is formulated by:
$$\forall S_{ij}, \ i=1,\ldots,n \ \textrm{and} \ j=1,\ldots,h; \ P_{ij} \in N_{j}.$$

\item Each ordered pairs $(x_{im}, f(x_{im})),\ m=1,\ldots,j$, corresponds to the one sent by the dealer to the participant $i$ belonging to the level $j$. This condition is expressed by:
$$\forall S_{ij},\ i=1,\ldots,n \ \textrm{and} \ j=1,\ldots,h;\ \forall x_{im}, \ m=1,\ldots,j; \ x_{im}=1\ (\bmod\ ih) \ \textrm{and} \ \left\lfloor\frac{x_{im}}{ih}\right\rfloor\leq j,$$
where $\lfloor.\rfloor$ denotes the floor function.
\end{enumerate}
  \item The second case of insider attacks consists on participants in the same level $N_{i}$, who are not allow to collaborate with their shares, according to condition ($\emph{ii}$), in Section \ref{sec:2.1}, trying to merge their shares to have only one and pretend to be a participant of another higher level $N_{j},\ j > i$. This case is treated as the first case described above.
  \item The last case of insider attacks consists on participant in level $N_{i}$, who may pretend to be a participant of another higher level $N_{j},\ j > i$, and try to calculate another value of $f(x)$. This case is similar to the outsider attacks described above.
\end{itemize}
\section{Didactic example}
\noindent Let consider the case of a company whose organization chart is represented by the tree $T$ given \mbox{in Figure ~\ref{figure1}} above. According to the initialization phase ~\ref{sec:2.1}:
	 \begin {itemize} [leftmargin=0.75cm,label=\textbullet,itemsep=0.5em]
        \item The number of hierarchical levels $h=3$.
		\item The set of participants $ P=\left\{P_ {11},P_{21},P_{31},P_{41},P_{51},P_{61},P_{72},P_{82},P_{93}\right\}$.
		\item  According to their hierarchical levels, participants are assigned as follow: \vspace{0.2cm} \\
 $N_1= \{P_{11},P_{21},P_{31},P_{41},P_{51},P_{61}\}$; \vspace{0.2cm} \\
 $N_2= \{P_{72},P_{82}\}$;\vspace{0.2cm} \\
 $N_3= \{P_{93}\}$.
\item The access structure $\Gamma_0$ containing all the minimal qualified subsets is given as follow:
\begin{equation}
\begin{split}
\notag \Gamma_0=&\{\{P_{93},P_{11}\},\ \ \{P_{93},P_{21}\},\ \ \ \{P_{93},P_{31}\},\ \ \{P_{93},P_{41}\},\ \ \{P_{93},P_{51}\},\ \ \{P_{93},P_{61}\},\ \ \ \{P_{93},P_{72}\},\\
&\{P_{93},P_{82}\},\{P_{82},P_{11},P_{21}\},\{P_{82},P_{11},P_{31}\},\{P_{82},P_{11},P_{41}\},\{P_{82},P_{11},P_{51}\},\{P_{82},P_{11},P_{61}\}, \\
&\{P_{82},\ P_{21},\ P_{31}\},\ \ \{P_{82},\ P_{21},\ P_{41}\},\ \ \{P_{82},\ P_{21},\ P_{51}\},\ \ \{P_{82},\ P_{21},\ P_{61}\},\ \{P_{82},\ P_{31},\ P_{41}\},\\
&\{P_{82},\ P_{31},\ P_{51}\},\ \ \{P_{82},\ P_{31},\ P_{61}\},\ \ \{P_{82},\ P_{41},\ P_{51}\},\ \ \{P_{82},\ P_{41},\ P_{61}\},\ \{P_{82},\ P_{51},\ P_{61}\},\\
&\{P_{72},\ P_{11},\ P_{21}\},\ \ \{P_{72},\ P_{11},\ P_{31}\},\ \ \{P_{72},\ P_{11},\ P_{41}\},\ \ \{P_{72},\ P_{11},\ P_{51}\},\ \{P_{72},\ P_{11},\ P_{61}\},\\
&\{P_{72},\ P_{21},\ P_{31}\},\ \ \{P_{72},\ P_{21},\ P_{41}\},\ \ \{P_{72},\ P_{21},\ P_{51}\},\ \ \{P_{72},\ P_{21},\ P_{61}\},\ \{P_{72},\ P_{31},\ P_{41}\},\\
&\{P_{72},\ P_{31},\ P_{51}\},\ \ \{P_{72},\ P_{31},\ P_{61}\},\ \ \{P_{72},\ P_{41},\ P_{51}\},\ \ \{P_{72},\ P_{41},\ P_{61}\},\ \{P_{72},\ P_{51},\ P_{61}\}\}
\end{split}
\end{equation}
\end{itemize}
\noindent Suppose for instance that the key $K$ is 32-bit integer and $q=4294967311$ a prime number greater than $2^{32}-1$. Based on the decomposition phase~\ref{sec:2.2}, let consider $k_{1}=4967295$, $k_{2}=94967$, $k_{3}=9496729$ and $a_{0}=429496$. The polynomial chosen by the dealer is then $$f(x)=429496+4967295x+94967x^{2}+9496729x^{3},$$ and the shares given to participants are:\vspace{0.2cm} \\
$S_{93}=(9,3,(x_{91},f(x_{91})),(x_{92},f(x_{92})),(x_{93},f(x_{93})))=(9,3,(28,2527731964),(55,31222823),(82,1673628957))$;\vspace{0.2cm} \\
$S_{72}=(7,2,(x_{71},f(x_{71})),(x_{72},f(x_{72})))=(7,2,(22,2492596253),(43,3826770342))$;\vspace{0.2cm} \\
$S_{82}=(8,2,(x_{81},f(x_{81})),(x_{82},f(x_{82}))=(8,2,(25,2541468297),(49,1061011979))$; \vspace{0.2cm} \\
$S_{11}=(1,1,(x_{11},f(x_{11})))=(1,1,(4,629608804));\vspace{0.2cm} \\
S_{21}=(2,1,(x_{21},f(x_{21})))=(2,1,(7,3297231991));\vspace{0.2cm} \\
S_{31}=(3,1,(x_{31},f(x_{31})))=(3,1,(10,966393524));\vspace{0.2cm} \\
S_{41}=(4,1,(x_{41},f(x_{41})))=(4,1,(13,3765498123));\vspace{0.2cm} \\
S_{51}=(5,1,(x_{51},f(x_{51})))=(5,1,(16,348113953))$;\vspace{0.2cm} \\
$S_{61}=(6,1,(x_{61},f(x_{61})))=(6,1,(19,842645734))$.\vspace{0.2cm} \\
It's clear that each qualified subset belonging to $\Gamma_0$ can recover the secret $K$. \vspace{0.2cm} \\
Let's take for instance the qualified subset $X=\{P_{82},P_{11},P_{21}\}$. According to the reconstruction Algorithm ~\ref{sec:3}, the polynomial $f$ can be reconstruct by applying interpolation. \vspace{0.2cm} \\
The polynomial $L$ defined bellow is the unique polynomial of degree at most $h$ satisfying \mbox{$L(x_{i})=y_{i}=f(x_{i})$}:
$$ L(x)=\sum \limits_{j=0}^{h} f(x_{j}) l_{j}(x),\ \textrm{where}\ l_{j}(x)=\prod\limits_{\underset{i\neq j}{i=0}}^{h} \left( \frac{x-x_{i}}{x_{j}-x_{i}}\right).$$
For the considered qualified subset $X$, the $h$ known values of $(x,f(x))$ are:
\begin{table}
\begin{center}
\begin{tabular}{|l|l|}
\hline
$x_{0}=x_{81}=25$ & $f(x_{0})=2541468297$ \\ \hline
$x_{1}=x_{82}=49$ & $f(x_{1})=1061011979$ \\ \hline
$x_{2}=x_{11}=4$ & $f(x_{2})=629608804$ \\ \hline
$x_{3}=x_{21}=7$ & $f(x_{3})=3297231991$ \\ \hline
\end{tabular}
\end{center}\vspace{0.2cm}
\caption{$(x,f(x))$ values of qualified subset.}
\end{table}
Lagrange polynomials are calculated as follow:
\begin{center}
\begin{tabular}{l}
$l_{0}(x) = \dfrac{(x-49)(x-4)(x-7)}{(25-49)(25-4)(25-7)}=\dfrac{1}{9072}\left(-x^{3}+60x^{2}-567x+1372\right)$,\vspace{0.2cm}\\
$l_{1}(x) = \dfrac{(x-25)(x-4)(x-7)}{(49-25)(49-4)(49-7)}=\dfrac{1}{45360}\left(x^{3}-36x^{2}+303x-700\right)$,\vspace{0.2cm}\\
$l_{2}(x) = \dfrac{(x-25)(x-49)(x-7)}{(4-25)(4-49)(4-7)}=\dfrac{1}{2835}\left(-x^{3}+81x^{2}-1743x+8575\right)$,\vspace{0.2cm}\\
$l_{3}(x) = \dfrac{(x-25)(x-49)(x-4)}{(7-25)(7-49)(7-4)}=\dfrac{1}{2268}\left(x^{3}-78x^{2}+1521x-4900\right).$
\end{tabular}
\end{center}
Hence
\begin{center}
\begin{tabular}{lll}
$L(x)$ &=&$2541468297\ l_{0}(x)+1061011979\ l_{1}(x)+629608804\ l_{2}(x)+3297231991\ l_{3}(x)\ (\bmod\ q)$\vspace{0.2cm}\\
&=& $f(x)$.
\end{tabular}
\end{center}
Therefore
\begin{table}[H]
\begin{center}
\begin{tabular}{|c|c|c|}
\hline
& Decimal value & Binary value \\ \hline
$k_{1}$ & $4967295$ & $010010111100101101111111$ \\ \hline
$k_{2}$ & $94967$ & $000000010111001011110111$ \\ \hline
$k_{3}$ & $9496729$ & $100100001110100010011001$ \\ \hline
$K=k_{1}\oplus k_{2}\oplus k_{3}$ & $14307601$ & $110110100101000100010001$
\\ \hline
\end{tabular}
\end{center}
\caption{Reconstruction of the secret $K$.}
\end{table}
\vspace{0.4cm}

\noindent \textbf{In case of insider attacks}: as a first case of an insider attack, let's take the case in which the subset $\left\{ P_{82}, P_{72}\right\}$, who is not qualified, try to reconstruct the secret by using the $P_{82}$ share's as if it concerned those corresponding to participants $P_{11}$ and $P_{21}$. For instance, instead of introducing the share $S_{82}$ given above, $P_{82}$ will introduce the following vectors $S_{11}^{'}$ and $S_{21}^{'}$ as shares of $P_{11}$ and $P_{21}$, respectively: \\
$$S_{11}^{'}=(1,1,(x_{81},f(x_{81})))=(1,1,(25,2541468297)),$$
$$S_{21}^{'}=(1,1,(x_{82},f(x_{82})))=(2,1,(49,1061011979)).$$
The condition ($\emph{iv}$), in Section \ref{sec:5}, is not satisfied in this case, since: \vspace{0.2cm}
 $$x_{81}=1\ (\bmod\ 3),\ \textrm{but}\ \left\lfloor\frac{x_{81}}{3}\right\rfloor > 1,$$
 $$x_{82}=1\ (\bmod\ 6),\ \textrm{but}\ \left\lfloor\frac{x_{82}}{6}\right\rfloor > 1.$$
\vspace{0.2cm}\\The system generates then an authentication error and display an attack attempt message. \vspace{0.2cm}\\
As a second case of an insider attack, let's take the case in which the subset $\left\{ P_{11}, P_{21}, P_{31}, P_{41}\right\}$, who is not qualified, according to condition ($\emph{ii}$), in Section \ref{sec:2.1}, try to reconstruct the secret by merging the shares of $P_{31}$ and $P_{41}$ and pretending to be the subset $\left\{ P_{11}, P_{21}, P_{72}\right\}$ for instance. \vspace{0.2cm}\\
In this case, instead of introducing the shares $S_{31}$ and $S_{41}$ given above, a merged share $S_{72}^{'}$ is introduced as if it was the one corresponding to the participant $P_{72}$:
$$S_{72}^{'}=(7,2,(x_{31},f(x_{31})),(x_{41},f(x_{41})))=(7,2,(10,966393524),(13,3765498123)).$$
The condition ($\emph{iv}$), in Section \ref{sec:5}, is not satisfied in this case, since
 $$\left\lfloor\frac{x_{31}}{21}\right\rfloor <2,\ \textrm{but} \  x_{31}\neq1\ (\bmod\ 21),$$
 $$\left\lfloor\frac{x_{41}}{21}\right\rfloor <2,\ \textrm{but} \ x_{41}\neq1\ (\bmod\ 21).$$
The system generates then an authentication error and display an attack attempt message. \vspace{0.2cm}\\
\noindent \textbf{In case of outsider attacks}: as all coefficients of $f$ are taken in $GF(q)$, the attackers should try $q^{h+1}$ \mbox{possible} combinations to reconstruct $f$. In our example, this requires $4294967311^{4}$ possibilities, that exceeds $2^{116}$.
\section{Conclusion}

\noindent In this paper, we first propose a novel construction of a secret sharing scheme, which is based on the hierarchical concept of companies. In the proposed scheme, polynomials are used over $GF(q)$ and the considered access structure is not uniform, since the number of parts needed to reconstruct the secret depends on the importance of the participants. We also present a reconstruction algorithm, in which the interpolation and the logical XOR are used to reconstruct the polynomial and recover the secret $K$, respectively. We show that the proposed scheme is perfect and ideal. Furthermore, the security of the proposed scheme is analyzed by discussing all possible kinds of attacks (insider and outsider) and proofing that confidentiality and authentication are ensured. Finally, we conclude by a detailed didactic example.

%\begin{acknowledgements}
%If you'd like to thank anyone, place your comments here
%and remove the percent signs.
%\end{acknowledgements}

% Authors must disclose all relationships or interests that
% could have direct or potential influence or impart bias on
% the work:
%
% \section*{Conflict of interest}
%
% The authors declare that they have no conflict of interest.

% BibTeX users please use one of
%\bibliographystyle{spbasic}      % basic style, author-year citations
%\bibliographystyle{spmpsci}      % mathematics and physical sciences
%\bibliographystyle{spphys}       % APS-like style for physics
%\bibliography{}   % name your BibTeX data base

% Non-BibTeX users please use

\end{document}